\newcommand\RR{\mathbb R}
\newtheorem{theorem}{Theorem}
\newtheorem{lemma}{Lemma}
\newtheorem{corollary}{Corollary}
\title[New mechanism for repeated posted price auction without discounting]{New mechanism for repeated posted price auction without discounting\\}
\author[N. Kalinin]{Nikita Kalinin}\thanks{National Research University Higher School of Economics, Soyuza Pechatnikov str., 16, St. Petersburg, Russian Federation. Support from the Basic Research Program of the National Research University Higher School of Economics is gratefully acknowledged. Supported in part by Young Russian Mathematics award.}  
\email{nkalinin@hse.ru}
\address{National Research University Higher School of Economics, Soyuza Pechatnikov str., 16, St. Petersburg, Russian Federation} 
\begin{document}
\begin{abstract} On ad exchange platforms the place for advertisement is sold through different kinds of auctions. However, it is not uncommon the situation where the seller repeatedly encounters only one buyer, thus the posted price auction degenerates into a monopoly-monopsony game with asymmetric information and nearly an infinite number of rounds; on each round the seller proposes the price and the buyer accepts or rejects it. 

I learned this problem from a discussion with members of Yandex research team and my main motivation was to find an incentive-compatible seller's strategy. In this short paper such a strategy is proposed and a corresponding distortion at the top type lower bound (Spence-Mirrlees property, actually) for the surplus of the buyer is established; this shows that the proposed strategy is the best possible.

The key ingredients are the following. The main leash that the buyer has is the frequency of accepted deals. Once this frequency (as a function on the buyer's type) is fixed, the strategy randomly chooses between the {\it rewarding} price which incentivises the buyer to reveal his type (the higher the type, the more average surplus the buyer has), the {\it adaptation} price which allows the buyer to communicate that his type is higher then the current guess of the cook, and the {\it type confirmation} price which disincentivises the buyer to pretend that his type is higher than it is.
\end{abstract}

%
%
%
%

\maketitle


\rightline{I foresee the progress of game theory as depending on successive}
\rightline{reductions in the base of common knowledge required}
\rightline{to conduct useful analyses  of practical problems.}
\rightline{Only by repeated weakening of common knowledge assumptions }
\rightline{will the theory approximate reality.}
\rightline{R. Wilson, {\it Game-Theoretic Analyses of Trading Processes}.}

\section{Setup}
\subsection{The game}

The {\it repeated posted price auction}, also known as the fishmonger problem, is an archetypical repeated monopoly-monopsony game with asymmetric information. This is a game with two players: the seller ({\it fisher}) and the buyer ({\it cook}). 
In each round $i=1,2,\dots$ the fisher proposes to the cook a unit of the good (think of fish or a place for advertisement) with the price $Q_i$. Let the cook's valuation of this good be $q\in \RR_{>0}$, the {\it type} of the cook. In each round the cook has two options: accept the deal (and then we set $a_i=1$) or reject it (we set $a_i=0$). In this round, the cook's {\it surplus} is $a_i(q-Q_i)$ and the fisher's {\it revenue} is $a_iQ_i$. This is an example of {\it one-sided ignorance}: the seller does not know the value of his good for the buyer. 

The repeated posted price auction appears in ad exchange conducted by the leading Internet companies, see \cite{amin2013learning,devanur2015perfect, drutsa2017consistency,drutsa2017horizon,mohri2015revenue,mohri2014optimal,kleinberg2003value} and the references therein. It is common to assume that $q$ is drawn from a known distribution. In line with \cite{bergemann2005robust}, recently there have appeared a bunch of papers (e.g. \cite{bergemann2008pricing,babaioff2010detail,goldner2016prior,tang2018price,babaioff2011posting}) which weaken this assumption since it is practically unrealistic if the buyer is unique of their kind. We will study the case when the fisher has no information at all about $q$ (it is even worse than the worst case in terminology of \cite{kleinberg2003value} where $q$ was at least in the interval $[0,1]$).  Consequently, to make sense of the problem we must further assume that  the game has an infinite number of rounds. 

It is also common to discount the surplus of the cook or introduce the stopping time, thus making the problem more accessible: the cook prefers less loss now to the bigger gain in the future. However, it is hard to justify any choice of discount factor in practice because nowadays many auctions are run by robots and  are performed many times per second. So we unleash the cook by imposing no discounting on his surplus.

\subsection{Contribution of this paper} A {\it distortion at the top} upper bound for the surplus of the fisher is established. Namely, if the fisher commits to a certain strategy in advance, then the higher the type $q$ of the cook, the bigger share of the total welfare $q$ distributed in each round goes to the cook. 

Then, we propose a strategy such that if the fisher {\it commits} to it, then the optimal in expectation response of the cook will be to play {\it naively}, i.e. accept the deal if and only if $Q_i\leq q$, thus this strategy is {\it incentive-compatible}. Furthermore, the revenue of the fisher in this strategy attains the upper bound discussed above, so this bound is tight. The proofs are technically simple and consist merely of changing the point of view.

The proposed mechanism is {\it credible} (cf. \cite{akbarpour2018credible}), the fisher exercises ``the power to commit'': the cook can verify that the fisher is using exactly this strategy, so there will be no mistrust.

\subsection{The objectives of the players}
Fix a strategy $S_{fisher}$ of the fisher and a strategy $S_{cook}$ of the cook.
In the spirit of Wald's maxmin model, we denote $$Fisher(S_{fisher},S_{cook},q)=\liminf\limits_{k\to\infty}\frac{1}{k}\sum_{i=1}^ka_iQ_i,$$ 
$$Cook(S_{fisher},S_{cook},q)=\liminf\limits_{k\to\infty}\frac{1}{k}\sum_{i=1}^ka_i(q-Q_i),$$
and say that the fisher's objective is to maximize $Fisher(S_{fisher},S_{cook},q)$ and the cook's objective is to maximize $Cook(S_{fisher},S_{cook},q)$. Roughly speaking, they maximize the minimal average revenue and the minimal average surplus respectively.   Since the total welfare is $q=(q-Q_i)+Q_i$ one may think than in each round the fisher and the cook divide $q$ among themselves, if the deal is accepted. 

A {\it strategy} (we allow mixed strategies too) is any rule which maps the previous history of the game to a distribution from where the next price (or a decision to accept or reject) is drawn.

%

\section{Observables and Spence-Mirrlees property}
Fix any strategy $S_{fisher}$ of the fisher. Suppose that the cook is of the type $q'$ and the cook knows $S_{fisher}$ in advance and chooses a strategy $S_{cook}^0$. Let them play and write the history of all the moves, i.e. $a_i,Q_i,i=1,\dots$. Taking the physical point of view we may try to extract {\it observables}, i.e. some quantitative data from this history. The first observable coming in mind is the empirical {\it proportion of accepted deals}. Namely, define

\begin{equation}
\label{eq_prob}
p(q')=p(q',S_{fisher},S_{cook}^0)=\liminf_{n\to\infty}\frac{1}{n}\sum_{i=1}^n a_i.
\end{equation}
Again, we would like to take the limit of the average frequency of accepted deals. Since it may not exist, we consider $\liminf$ instead.

\begin{lemma}
\label{lemma_grows}
Suppose that $p(q')>0$. Then, for each $\varepsilon>0$, for $q$ big enough we have 
$$\min Fisher(S_{fisher},S,q)\leq q\cdot(\varepsilon+1-p(q')),$$
where we take the minimum by all strategies $S$ of the cook. In other words, the fisher can not guarantee himself more than the share $1-p(q')$ of the total welfare for all sufficiently large $q\in\RR_{>0}$.
\end{lemma}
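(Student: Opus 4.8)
The plan is to prove the inequality by \emph{changing the point of view}: rather than asking what a type-$q$ cook should do from scratch, I point out that the type-$q$ cook can simply \emph{imitate} the type-$q'$ cook. By definition a strategy of the cook is a map from the observed history $(Q_1,a_1,Q_2,a_2,\dots)$ to the next decision, and the cook's valuation appears nowhere in this map; hence the cook of type $q$ is free to run $S_{cook}^0$. Call this strategy $S$. Played against the fixed $S_{fisher}$, the process of moves $(Q_i,a_i)_i$ that it generates is identical in law to the one in the type-$q'$ game — the fisher is literally unable to tell the two cooks apart. In particular the empirical acceptance frequency along this play is again $\liminf_n\frac1n\sum_{i=1}^n a_i=p(q')$, and $Fisher(S_{fisher},S,q)=\liminf_k\frac1k\sum_{i=1}^k a_iQ_i$ equals the number $C:=Fisher(S_{fisher},S_{cook}^0,q')$, which depends only on $S_{fisher},S_{cook}^0,q'$, not on $q$.

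The next step is to bound $C$. Here I would add the two long-run averages and use $\liminf f+\liminf g\le\liminf(f+g)$ together with the identity $a_iQ_i+a_i(q'-Q_i)=a_iq'$:
$$C+Cook(S_{fisher},S_{cook}^0,q')\ \le\ \liminf_k\frac1k\sum_{i=1}^k a_iq'\ =\ q'\,p(q')\ \le\ q'.$$
Since rejecting every offer secures the cook a surplus of $0$, the response $S_{cook}^0$ that the type-$q'$ cook actually plays cannot leave it worse off, so $Cook(S_{fisher},S_{cook}^0,q')\ge 0$ (this is the only feature of $S_{cook}^0$ we use; one may otherwise just take a best response). Feeding this into the display gives $C\le q'$, and incidentally that $C$ is finite — a cook playing to protect itself cannot let the fisher's average revenue run off to $+\infty$.

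It remains to assemble the pieces: taking the minimum over all cook strategies only decreases the value, so $\min_S Fisher(S_{fisher},S,q)\le Fisher(S_{fisher},S,q)=C\le q'$. As $p(q')\le 1$ we have $\varepsilon+1-p(q')\ge\varepsilon>0$, and therefore for every $q\ge q'/\varepsilon$ — that is, for $q$ big enough — one has $q'\le q\varepsilon\le q\,(\varepsilon+1-p(q'))$, which is exactly the asserted bound. The argument uses nothing about $S_{fisher}$ beyond its being fixed in advance, which is precisely why the conclusion holds for \emph{every} committed fisher strategy; and if one does not even want to assume $S_{cook}^0$ is rational, the cook of type $q$ may instead reject everything, which gives $Fisher=0$ and proves the bound directly. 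The single load-bearing point is the estimate $C\le q'$ — everything else is bookkeeping with $\liminf$ — and the one thing that genuinely has to be checked there is that the imitated play does not itself bankrupt the cook, so that $C$ is a finite, $q$-independent constant.
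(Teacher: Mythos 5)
Your argument is correct as a proof of the literal statement, but it reaches it by a route that bypasses the lemma's actual content. You observe that the fisher's long-run revenue $\liminf_k\frac1k\sum_{i\le k} a_iQ_i$ depends only on the realized play, so against the imitating cook it equals a $q$-independent constant $C=Fisher(S_{fisher},S_{cook}^0,q')\le q'$, hence $\le\varepsilon q$ for $q\ge q'/\varepsilon$; the factor $1-p(q')$ and the hypothesis $p(q')>0$ never enter, and, as you yourself note, the reject-everything strategy gives the conclusion even faster since the minimum over \emph{all} cook strategies is $0$. The paper's proof is different in a way that matters downstream: it lower-bounds the \emph{surplus} of the type-$q$ cook who imitates, $Cook(S_{fisher},S_{cook}^0,q)\ge Cook(S_{fisher},S_{cook}^0,q')+(q-q')p(q')$ (each accepted round is worth an extra $q-q'$), and converts this into a revenue bound via the welfare identity $a_iQ_i+a_i(q-Q_i)=a_iq$, i.e. $Fisher\le q-Cook$. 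This yields $Fisher(S_{fisher},S,q)\le q(1-p(q'))+\bigl[q'p(q')-Cook(S_{fisher},S_{cook}^0,q')\bigr]$ for \emph{every} cook strategy $S$ whose surplus is at least that of imitation --- in particular for the cook's best response $S_{cook}(q)$ --- and that is the inequality the subsequent distortion-at-the-top corollary uses, where the ratio $Fisher/Cook$ is evaluated at $S_{cook}(q)$, not at the imitation or reject-everything strategy. So you have proven the letter of the lemma (and correctly exposed that, read with the minimum over all cook strategies, it is vacuous), but your version would not support the corollary; to repair this, keep your imitation step but replace ``the fisher's revenue is a constant'' by the surplus lower bound combined with $Fisher\le q-Cook$. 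Your side computation $C+Cook(S_{fisher},S_{cook}^0,q')\le q'p(q')$ via superadditivity of $\liminf$ is fine, though note it needs the extra (reasonable, but unstated in the paper) assumption that $S_{cook}^0$ secures the type-$q'$ cook a nonnegative surplus.
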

\begin{proof}

For $q>q'$ by playing $S_{cook}^0$ the cook of type $q$ has  $$Cook(S_{fisher},S_{cook}^0,q)\geq Cook(S_{fisher},S_{cook}^0,q')+(q-q')p(q').$$ 
This gives
$$\min Fisher(S_{fisher},S,q)\leq Fisher(S_{fisher},S_{cook}^0,q)\leq q - Cook(S_{fisher},S_{cook}^0,q')-(q-q')p(q') = $$

$$=q(1-p(q')) + \big[q'p(q')-Cook(S_{fisher},S_{cook}^0,q')\big]<q\cdot(1-p(q')+\varepsilon)$$
for $q$ big enough, which finishes the proof.
\end{proof}

This proof amounts to the fact that by pretending of being of the type $q'$ the cook takes roughly at least $1-p(q')$ share of the total welfare in each round.


%
%
%
%
%
%

Let the cook of type $q$ play a strategy $S_{cook}(q)$ against $S_{fisher}$. It is natural to assume that the revenue of the cook of type $q$ using strategy $S_{cook}(q)$ is at least that of playing the strategy $S_{cook}(q')$ with $q'<q$, because otherwise we may set $S(q):=S(q')$.

Using the history of playing $S_{cook}(q)$ against $S_{fisher}$ we define $p(q)$ for all $q>0$ as

$$p(q)=p(q,S_{fisher},S_{cook}(q))=\liminf_{n\to\infty}\frac{1}{n}\sum_{i=1}^n a_i.$$

\begin{lemma}
\label{lemma_growth}Under the above assumptions, for each $q,x\geq 0$ we have $$Cook(S_{fisher},S_{cook}(q+x),q+x)\geq Cook(S_{fisher},S_{cook}(q),q)+x\cdot p(q).$$
\end{lemma}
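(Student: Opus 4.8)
The plan is to exploit the same idea as in the proof of Lemma~\ref{lemma_grows}: the cook of type $q+x$ can always ``pretend to be of type $q$'' by executing the strategy $S_{cook}(q)$ in place of $S_{cook}(q+x)$. Since a strategy is merely a rule mapping histories to distributions and does \emph{not} take the type as an input, running $S_{cook}(q)$ against $S_{fisher}$ produces exactly the same distribution of histories $(a_i,Q_i)_{i\ge 1}$ irrespective of whether the true type is $q$ or $q+x$; in particular the observable proportion of accepted deals along such a play is $p(q)$ in either case. (To be fully careful about mixed strategies, one couples the two plays by feeding them the same randomness, so that the sequences $(a_i,Q_i)$ literally coincide.)

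Next I would decompose the per-round surplus of the type-$(q+x)$ cook who runs $S_{cook}(q)$ as
$$a_i\big((q+x)-Q_i\big)=a_i(q-Q_i)+x\,a_i,$$
average over $i=1,\dots,k$, and pass to the $\liminf$. Using superadditivity of $\liminf$ together with $x\ge 0$, this gives
$$Cook\big(S_{fisher},S_{cook}(q),q+x\big)\ \ge\ Cook\big(S_{fisher},S_{cook}(q),q\big)+x\cdot p(q).$$

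Finally I would invoke the monotonicity assumption stated just before the lemma, applied to the type $q+x$ and the smaller type $q$: the cook of type $q+x$ fares at least as well with $S_{cook}(q+x)$ as with $S_{cook}(q)$, i.e.
$$Cook\big(S_{fisher},S_{cook}(q+x),q+x\big)\ \ge\ Cook\big(S_{fisher},S_{cook}(q),q+x\big).$$
Chaining the two displayed inequalities yields the claim; the case $x=0$ is trivial.

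The only real point requiring care is the identification of the observable: one must argue that the frequency of accepted deals is unchanged when the type is varied but the executed strategy is held fixed. This is exactly where the ``type-blindness'' of strategies (and, in the mixed case, the coupling argument) does the work. Everything else is the elementary algebra above, plus the inequality $\liminf(A_k+B_k)\ge\liminf A_k+\liminf B_k$.
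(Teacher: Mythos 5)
Your proposal is correct and follows essentially the same route as the paper: the type-$(q+x)$ cook imitates $S_{cook}(q)$, the per-round surplus splits as $a_i(q-Q_i)+x\,a_i$ so that the deviation yields at least $Cook(S_{fisher},S_{cook}(q),q)+x\cdot p(q)$, and the monotonicity assumption preceding the lemma upgrades this to a bound on $Cook(S_{fisher},S_{cook}(q+x),q+x)$. The paper's proof is a two-line version of exactly this argument; your additions (the type-blindness/coupling remark and the superadditivity of $\liminf$) just make explicit the details the paper leaves implicit.
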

\begin{proof}If the cook's type is $q+x$ and he plays the strategy $S_{cook}(q)$, his average surplus is at least $$Cook(S_{fisher},S_{cook}(q),q)+x\cdot p(q)$$ which must be at most $Cook(S_{fisher},S_{cook}(q+x),q+x)$.
\end{proof}

Morally, this estimate tells us that the derivative of $R(q)  = Cook(S_{fisher},S_{cook}(q),q)$ (if it exists) is at least $p(q)$. Also, this is the old good Spence-Mirrlees property.

It is not difficult to show that if $p(q)$ locally decreases with $q$ then is not possible that both fisher and cook are better when $q$ increases. Therefore it is natural to assume that $p(q)$ grows with $q$. Additionally, one is likely to expect that the higher the type $q$ of the buyer, the more frequently the deals are made. Also it is natural to assume that $\lim_{q\to\infty} p(q)=1$.

\subsection{Distortion at the top} It directly follows from Lemma~\ref{lemma_grows} that the following property holds. 

\begin{corollary}If $p(q)$ is non-decreasing function on $q$ with $\lim_{q\to\infty}p(q)=1$ then $$\lim_{q\to\infty}\frac{Fisher(S_{fisher},S_{cook}(q),q)}{Cook(S_{fisher},S_{cook}(q),q)}=0,$$
i.e the fraction that the fisher gets out of the total welfare $q$ tends to zero.
\end{corollary}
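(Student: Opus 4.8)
The plan is to squeeze the ratio between $0$ and a quantity that can be made as small as we like. The lower bound $0$ is immediate since prices are nonnegative, so $Fisher(S_{fisher},S_{cook}(q),q)=\liminf_k \frac1k\sum_{i=1}^k a_iQ_i\ge 0$ and, as noted below, the denominator is eventually positive. For the upper bound I would combine the trivial welfare inequality with the Spence--Mirrlees estimate of Lemma~\ref{lemma_growth} (equivalently, the deviation argument behind Lemma~\ref{lemma_grows}).

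First I would bound the numerator. Along the history generated by $S_{cook}(q)$ against $S_{fisher}$ one has $a_iQ_i+a_i(q-Q_i)=a_iq\le q$ in every round, so passing to $\liminf$ and using superadditivity of $\liminf$ gives $Fisher(S_{fisher},S_{cook}(q),q)+Cook(S_{fisher},S_{cook}(q),q)\le q$. Now fix any $q'>0$. Applying Lemma~\ref{lemma_growth} with its two free parameters set to $q'$ and $q-q'$ (for $q>q'$) yields
\[
Cook(S_{fisher},S_{cook}(q),q)\ge Cook(S_{fisher},S_{cook}(q'),q')+(q-q')\,p(q').
\]
Substituting this into the previous display gives
\[
Fisher(S_{fisher},S_{cook}(q),q)\le q\bigl(1-p(q')\bigr)+C_{q'},\qquad C_{q'}:=q'p(q')-Cook(S_{fisher},S_{cook}(q'),q'),
\]
and $C_{q'}\ge 0$ because $\frac1k\sum a_i(q'-Q_i)\le q'\cdot\frac1k\sum a_i$ forces $Cook(S_{fisher},S_{cook}(q'),q')\le q'p(q')$. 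Crucially, $C_{q'}$ does not depend on $q$.

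Now take $\varepsilon\in(0,1)$. Since $p$ is non-decreasing with $\lim_{q\to\infty}p(q)=1$, choose $q'$ with $p(q')>1-\varepsilon$; in particular $p(q')>0$ so Lemma~\ref{lemma_grows} is applicable and the displayed lower bound $Cook(S_{fisher},S_{cook}(q),q)\ge (q-q')p(q')+Cook(S_{fisher},S_{cook}(q'),q')$ shows the denominator grows at least like $(q-q')(1-\varepsilon)\to+\infty$ (using that $S_{cook}(q')$ gives the cook at least the surplus $0$ of refusing every round). Hence for all large $q$ the ratio is well defined and
\[
0\le \frac{Fisher(S_{fisher},S_{cook}(q),q)}{Cook(S_{fisher},S_{cook}(q),q)}\le \frac{q\varepsilon+C_{q'}}{(q-q')(1-\varepsilon)}.
\]
Letting $q\to\infty$, the right-hand side tends to $\varepsilon/(1-\varepsilon)$, so $\limsup_{q\to\infty}$ of the middle quantity is at most $\varepsilon/(1-\varepsilon)$; since $\varepsilon\in(0,1)$ is arbitrary and the quantity is nonnegative, the limit equals $0$.

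I do not anticipate a genuine obstacle: this is the same ``change of viewpoint'' as in Lemma~\ref{lemma_grows}, and the only points needing care are bookkeeping ones. One must note that $q'$ (hence $C_{q'}$) depends on $\varepsilon$ but not on $q$, so $C_{q'}/q\to 0$ and the additive constant is harmless in the limit; and one must make sure the ``pretend to be type $q'$'' bound is applied to the actual strategy $S_{cook}(q)$ rather than to a minimiser — this is exactly what Lemma~\ref{lemma_growth} supplies, relying on the standing monotonicity assumption that $S_{cook}(q)$ is no worse for type $q$ than $S_{cook}(q')$.
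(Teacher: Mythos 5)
Your proof is correct and is essentially the intended argument: the paper leaves the corollary as "directly follows from Lemma~\ref{lemma_grows}", and your write-up is the natural fleshing-out, combining the welfare identity $Fisher+Cook\le q$ with the deviation bound to control numerator and denominator simultaneously. You rightly observe that one must route the bound through Lemma~\ref{lemma_growth} (i.e.\ the standing monotonicity assumption on $S_{cook}(q)$) so that it applies to the cook's actual strategy rather than to the type-$q'$ deviation; the remaining steps (nonnegativity and harmlessness of $C_{q'}$, the denominator tending to $+\infty$) are correct bookkeeping.
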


Note that using an arbitrary strategy $S_{fisher}$ of the fisher and the responses $S_{cook}(q)$ of the cook we constructed a function $p(q)$, the observable which measures the proportion of accepted deals. Now we switch our attention to this monotone function $p(q):[0,\infty)\to[0,1]$ and show how to use it in order to construct an incentive-compatible strategy $S_{fisher}$ of the fisher, such that the best response strategy $S_{cook}(q)$ of the cook will amount to exactly this given proportion $p(q)$ of accepted deals.


\section{A new mechanism}
\label{sec_mech}
Our mechanism is introduced after the following informal motivation. 
\subsection{Commitment}

Let $S_{cook}^{naive}(q)$ be the following {\it naive strategy} for the cook: accept the price $Q_i$ if $Q_i\leq q$, refuse otherwise. For the cook of type $q$, playing $S_{cook}^{naive}(q)$ is not always optimal: the fisher quickly determines $q$ and the cook's surplus $q- Q_i$ will tend to zero.

On the other hand, imagine that the cook makes a {\it commitment}: he tells to the fisher that he is going to play $S_{cook}^{naive}(1)$ regardless the actions of the fisher. If the cook sticks to this strategy (fortunately for the fisher in many cases the cook has no such a commitment power), then the best response for the fisher is to always propose $Q_i=1$, other strategies bring strictly less revenue.  In any case, it is in the interest of the fisher to incentivise the cook not to play this strategy if the cook's type is much bigger than $q=1$. The only way to do this is to assure the cook that the higher prices he accepts, higher will be his average surplus.

If the fisher has a strategy, it makes sense to reveal it. Indeed, otherwise a strategic cook tricks the fisher and is afraid to buy even for a small $Q_i$ because this gives the fisher an information about the cook's type, and the fisher will use this information in an unknown way. It is proven that the absence of a fisher's commitement may crash the market \cite{devanur2015perfect,hart1988contract,schmidt1993commitment,immorlica2017repeated}. 

\subsection{Strategy background ideas, fisher's guess of the cook's type}
A strategy of the fisher (choosing the price $Q_{n+1}$ after the round $n$) can depend on the history of proposed prices $Q_i$ and the responses $a_i$ of the cook in all rounds $i$ preceding a given round. It seems reasonable for the fisher to squash this bunch of information to a number:  we suppose that before playing the round $i$ the fisher has a guess $q_i$ of the cook's type, and his next proposal depends only on $q_i$.

A good strategy $S_{fisher}$ of the fisher has the following qualities:
\begin{itemize}
\item {\it type adaptation}, $S_{fisher}$ should depend on $q_i$ and adapt to it (one may imagine that the true type $q$ of the cook slowly changes over time an it would be good if the fisher's strategy adapts to this change);
\item {\it rewarding}, to incentivize the cook, the surplus of the cook should be monotone with respect to $q_i$, and even more: since the cook can pretend that his type is lower than it is, the growth of the cook's surplus when the fisher's estimate $q_i$ increases must include what the cook can obtain by pretending that he is of a lower type;
\item {\it type confirmation}, the fisher should not incentivise the cook to pretend that his type is higher than it is, because stimulating to lie is not good by itself and because this again can crash the market and impose additional revenue loss for the fisher.
\end{itemize}

To disentangle all these features we take each of them to the extreme and propose each of them with some probability (because mixed strategies frequently work better in the context of learning or truthful mechanism design, cf. \cite{dobzinski2009power}). Start with {\it rewarding}: the fisher will sometimes propose $Q_i=0$, i.e. will give the object to the cook for free. To implement {\it type adaptation} it is enough to propose a price higher than the current estimate $q_i$, for example, $Q_{i}$ is drawn from a uniform distribution on $[q_i,1+q_i]$ (or any other distribution with the support on $[q,+\infty)$). To ensure that the cook is indeed of the type $q_i$, one might propose the price $q_i$ sometimes, and if the cook refuses, then set  $q_{i+1}<q_i$, which should cause a revenue loss for the cook.

\subsection{The mechanism, formally}
The fisher knows that $q\in[0,\infty)$ (the mechanism can be easily adapted to the case when the fisher knows for sure that $q$ belongs to a fixed interval). The fisher fixes any increasing function $p:[0,\infty)\to[0,1], p(0)=0, \lim_{q\to\infty}p(q)=1$.
Let $R$ be the solution of the following differential equation: $R(0)=0, R'(q)=p(q)$. The fisher commits to use the following strategy, all the ingredients ($p$ and the following algorithm) of which are announced to the cook in advance. We will show that the optimal response $S_{cook}(q)$ of the cook of type $q$ gives him $Cook(S_{fisher},S_{cook}(q),q)=R(q)$.

In each round the fisher has an estimate $q_n$ of $q$ ($q_0=0$). The fisher plays the following mixed strategy $S_{fisher}(q_n)$ randomly choosing between three types of prices in this round: 
\begin{itemize}
\item (with probability $1-p(q_n)$) the fisher plays {\it type adaptation}: the price $Q_n$ is uniformly drawn from $[q_n,q_n+1]$, accepting this plice the cook achieves that $q_{n+1}>q_n$. If $Q_n$ is accepted, then the fisher sets $q_{n+1}=Q_n$. If $Q_n$ is refused, then $q_{n+1}=q_n$.

\item (with probability $\frac{R(q_n)}{q_n}$) the fisher plays {\it rewarding}: the price $Q_n=0$, which is profitable for the cook, and the probability of this price appearing grows with $q_n$ thereby stimulating the cook to reveal his type.

\item (with probability $p(q_n)-\frac{R(q_n)}{q_n}$) the fisher plays {\it type confirmation}: the price $Q_n=q_n$, which is assumed to give no surplus to the cook, but refusing it incurs lowering the estimate of the type of the cook. If $Q_n$ is accepted, then $q_{n+1}=q_n$. If  $Q_n$ is refused, then we should set $q_{n+1}<q_{n}$. This can be done in many ways, e.g. let $k=\sum_{i=1}^n a_i$. Let us reorder the set $\{Q_i\}_{i=1}^n$, $$Q_1'\geq Q_2'\geq Q_3'\geq\dots Q_n'.$$ The fisher sets $q_{n+1}=\frac{Q_{n-k}'+Q_{n-k-1}'}{2}$. The intuition for this formula is as follows: it is natural to define $q_n$ (the empirical guess of $q$) as the average between the maximal accepted price and the minimal rejected price. But, depending on the strategy of the cook, the latter can be less than the former. So we reorder all proposed prices and choose $q_n$ such that the proportion of proposed prices higher than $q_{n+1}$ is equal to the proportion of refused prices.

\end{itemize} 

%

While the fisher plays  $S_{fisher}(q_n)$, the average surplus of the cook of type $q\ne q_n$, playing $S_{cook}^{naive}(q)$ is
$$(q-q_n)\left(p(q_n)-\frac{R(q_n)}{q_n}\right)+ q\left(\frac{R(q_n)}{q_n}\right)=(q-q_n)p(q_n)+R(q_n)<R(q),$$
where the last inequality follows from the definition of $R(q)$ if $q_n<q$, and from the convexity of $R(q)$ when $q_n>q$. This kind of inequality is common in Revenue Equivalence type theorems.

From this  we derive the following theorem.
\begin{theorem} 
For any strategy $S_{cook}(q)$ of the cook we have $$\mathbb E(Cook(S_{fisher},S_{cook}^{naive}(q),q))\geq \mathbb E(Cook(S_{fisher},S_{cook}(q),q)),$$
where $\mathbb E(\cdot)$ stands for the expectation (recall that the strategy $S_{fisher}$ is not pure).
\end{theorem}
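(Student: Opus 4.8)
The plan is to reduce the claim to a per-round estimate, dispose of the rewarding and type confirmation rounds by convexity, amortize the type adaptation rounds against a potential, and then observe that the surviving error term vanishes in the long-run average. Write $T_k=\sum_{i=1}^k a_i(q-Q_i)$, so that $Cook(S_{fisher},S_{cook}(q),q)=\liminf_k\tfrac1k T_k$, and let $\mathcal F_n=\sigma(Q_1,a_1,\dots,Q_n,a_n)$. Since $S_{fisher}$ depends on the past only through the guess $q_n$ (which is $\mathcal F_{n-1}$-measurable), in round $n$ the fisher picks, independently of $\mathcal F_{n-1}$, one of the three price types with probabilities $\tfrac{R(q_n)}{q_n}$, $p(q_n)-\tfrac{R(q_n)}{q_n}$, $1-p(q_n)$. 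The per-round payoff $X_n=a_n(q-Q_n)$ is bounded, $-(q+1)\le X_n\le q$, so a strong law for bounded martingale differences gives $\tfrac1k\bigl(T_k-\sum_{n\le k}\mathbb E[X_n\mid\mathcal F_{n-1}]\bigr)\to 0$ almost surely, and it suffices to bound $\tfrac1k\sum_{n\le k}\mathbb E[X_n\mid\mathcal F_{n-1}]$ for an arbitrary cook strategy.

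In a rewarding round the cook's payoff is at most $q$, and in a type confirmation round at most $(q-q_n)^+$, so these two types contribute to $\mathbb E[X_n\mid\mathcal F_{n-1}]$ at most $\tfrac{R(q_n)}{q_n}q+\bigl(p(q_n)-\tfrac{R(q_n)}{q_n}\bigr)(q-q_n)^+$, which for $q_n\le q$ equals $R(q_n)+p(q_n)(q-q_n)\le R(q)$ — this is exactly the Revenue Equivalence inequality quoted before the theorem, valid since $R$ is convex. For the type adaptation rounds I would amortize against the potential $\Psi(v):=\min(v,q)$: accepting an adaptation price $Q_n\in[q_n,q_n+1]$ with $Q_n\le q$ forces $q_{n+1}=Q_n$ and yields payoff $(q-q_n)-\bigl(\Psi(q_{n+1})-\Psi(q_n)\bigr)$, while $\Psi$ is unchanged on rewarding and accepted type confirmation rounds and can only decrease on refused type confirmation rounds. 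A short case check over the cook's options in each price type then gives, for every cook strategy and every round,
\[
\mathbb E\bigl[X_n+\Psi(q_{n+1})-\Psi(q_n)\mid\mathcal F_{n-1}\bigr]\le R(q)+E(q_n),
\]
where $E(v)=q-v+R(v)-R(q)$ for $v\le q$ and $E(v)=\tfrac{R(v)}{v}q-R(q)$ for $v>q$; one checks $0\le E\le q-R(q)$ and $E(q)=0$. Summing over $n\le k$ and using $0\le\Psi\le q$ yields $\mathbb E[T_k]\le kR(q)+\sum_{n\le k}\mathbb E[E(q_n)]+q$.

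Everything now rests on showing $\tfrac1k\sum_{n\le k}\mathbb E[E(q_n)]\to 0$ for every cook strategy, i.e.\ that the cook cannot keep the guess $q_n$ away from $q$ on a non-negligible fraction of rounds without paying for it; this is the main obstacle, and it is precisely where the mechanism earns its keep. While $q_n<q$ the cook forfeits the positive surplus $q-q_n$ on every refused type confirmation or type adaptation price, so he cannot simply park $q_n$ below $q$; and he cannot cheaply undo a high guess either, because the downgrade $q_{n+1}=\tfrac12(Q'_{n-k}+Q'_{n-k-1})$ brings $q_n$ down only to a quantile of the previously proposed prices, which is dominated by values $\ge q_n$ once the guess has climbed (every past type confirmation price equalled the then-current guess). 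Turning this heuristic into a quantitative estimate — bounding the total downward movement $\sum\bigl(\Psi(q_n)-\Psi(q_{n+1})\bigr)^+$ the cook can manufacture in terms of the surplus he forgoes, hence $\sum_{n\le k}\mathbb E[E(q_n)]=o(k)$ — is the technical core of the argument; the remainder is convexity bookkeeping.

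Granting this, $\mathbb E[T_k]\le kR(q)+o(k)$, so $\liminf_k\tfrac1k\mathbb E[T_k]\le R(q)$ and, since $X_n\ge-(q+1)$, Fatou's lemma gives $\mathbb E\bigl(Cook(S_{fisher},S_{cook}(q),q)\bigr)\le R(q)$. Finally, under $S_{cook}^{naive}(q)$ the guess $q_n$ is non-decreasing, bounded by $q$, and (each type adaptation round raising it with positive probability as long as $q_n<q$) converges to $q$ almost surely; hence the per-round payoff tends to $R(q)$, the type adaptation contribution $(1-p(q_n))\,\mathbb E_{Q_n}[(q-Q_n)^+]$ going to $0$ in agreement with the displayed computation, and $\tfrac1k T_k\to R(q)$ almost surely. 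Thus $\mathbb E\bigl(Cook(S_{fisher},S_{cook}^{naive}(q),q)\bigr)=R(q)\ge\mathbb E\bigl(Cook(S_{fisher},S_{cook}(q),q)\bigr)$, which is the claim.
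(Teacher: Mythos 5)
There is a genuine gap, and it is not only the step you explicitly defer: the lemma you defer to is false as stated. Your bound $\mathbb E[T_k]\le kR(q)+\sum_{n\le k}\mathbb E[E(q_n)]+q$ closes only if $\frac1k\sum_{n\le k}\mathbb E[E(q_n)]\to 0$ for \emph{every} cook strategy, but take $S_{cook}(q)=S_{cook}^{naive}(v)$ with $v<q$: then $q_n\uparrow v$ almost surely, so $E(q_n)\to E(v)=(q-v)-(R(q)-R(v))=\int_v^q(1-p(t))\,dt>0$ and the Ces\`aro average converges to this positive constant. The theorem still holds for that cook (his payoff is $(q-v)p(v)+R(v)\le R(q)$), but your estimate only yields $\mathbb E[T_k]\lesssim k\bigl(R(v)+q-v\bigr)>kR(q)$. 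The slack comes from taking the maximum over the cook's options in each of the three price types independently: the adaptation-round contribution $(1-p(q_n))(q-q_n)$ to $\mathbb E[X_n+\Delta\Psi\mid\mathcal F_{n-1}]$ is attainable only by \emph{accepting}, which raises $q_n$ and drives $E(q_n)$ to $0$; a cook who keeps $E(q_n)$ bounded away from $0$ must be rejecting those prices and earns $0$, not $q-q_n$, in those rounds. Your per-round bound awards him both. Any repair must couple the accept/reject decision in adaptation rounds to the evolution of $q_n$ — which is precisely the content of the paper's (informal) observation that sustaining a guess $v$ forces the cook to play essentially $S_{cook}^{naive}(v)$, whose per-round payoff $(q-v)p(v)+R(v)\le R(q)$ is the displayed Revenue-Equivalence inequality.

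On the side $q_n>q$, the part you defer — showing that the surplus sacrificed in climbing above $q$ (accepting adaptation and confirmation prices exceeding $q$), together with the quantile downgrade rule $q_{n+1}=\tfrac12(Q'_{n-k}+Q'_{n-k-1})$, dominates in long-run average the inflated reward frequency $R(q_n)/q_n$ collected before the guess is knocked back down — is indeed the technical core, and to be fair the paper's own proof also argues this only at the level of incentives rather than quantitatively. Your martingale/potential framework and the Fatou step are sensible scaffolding, and your final paragraph (naive-at-$q$ achieves exactly $R(q)$) is correct; but as written the middle of the argument reduces the theorem to a false statement, so the decomposition itself has to be redone, not merely completed.
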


\begin{proof}The cook can manipulate $q_n$. The incentives are as follows: 1) If $q_n<q$ then for the cook it is profitable because the cook gains additionally on accepting {\it type adaptation} price; 2) If $q_n>q$, then for the cook it is profitable because the {\it reward} price $Q_i=0$ appears more frequently.

By the mechanism, to sustain $q_n<q$ the cook has to always reject prices $Q_i>q_n$, i.e. this strategy can not be more profitable than $S_{cook}^{naive}(q_n)$. In order to make $q_n$ bigger than $q$, the cook should accept {\it type adaptation} prices (which are rare, so the process will take a long time) and {\it type confirmation} prices, both have negative surplus for the cook, so in the expectation it is not profitable. Then the cook may enjoy reward prices for some time, but refusing a status confirmation price results in setting $q_n$ lower. 

To summarise, due to the strategy of the fisher, in order to make the fisher believe that $q$ is $q'$ the cook should, in fact, play a strategy $S_{cook}^{naive}(q')$ for a substantial time, which is not profitable on average by construction.
\end{proof}

Surely, a risk loving cook can play $S_{cook}^{naive}(x)$ with $x>q$ and this can be more profitable than playing $S_{cook}^{naive}(q)$ on short sequences of rounds. 


\section{Discussion}
The main idea of this paper is to put the situation upside down: for a third party observing the game the most salient number which can be extracted from the game is the proportion $p$ of accepted deals. It is a function $p(q)$ on the type $q$ of the cook. Then to construct the mechanism we use this function to produce the strategy $S_{fisher}=S_{fisher}(p(\cdot))$ of the fisher, and if the cook uses his best response to $S_{fisher}$, then the proportion of accepted deals is exactly $p(q)$. Any good strategy must have the following features: it should reward the cook for the revelation of his true type (and here we see a common phenomenon: higher the type, bigger should be the reward), it should adapt to the cook's type, and it should not incentivise the cook to pretend that his type is higher than it is, because in this case the fisher should be giving the cook a bigger reward. Our mechanism satisfies these three requirements.


As far as we know 
the proposed mechanism is novel (and very simple). Another distribution for mixed strategies may be considered in the fisher's strategy, but they all should have an atom at $q$: indeed, if the surplus is growing with $q$, the cook has an incentive to pretend that his type is higher than it is. To prevent it we must always check (by offering the price $q$) that the cook is indeed of type $q$ or lower. In the {\it contract theory} paradigm (cf. recent \cite{malcomson2016relational}) one can give the following metaphor: 1) an employee should have opportunities to show that she is more capable that her status suggests, 2) higher the status, more free benefits (ratchet effect), 3) there should be always work to do, to confirm her high status.

The idea that in order to reveal the true type of the cook, the fisher must give him the substantial part of the total welfare (almost $100\%$ of welfare when the cook's type is huge) is not new and can be traced back to \cite{myerson1981optimal}.

Note another advantage of the proposed mechanism: if the valuation $q$ of the cook changes, the algorithm can be easily adapted -- for example, we may set the rule for $q_n$ taking into account only the last $100$ deals. Similar statements can be found in \cite{amin2013learning} (Section 6), but under a different abstract disguise, in another context, and without a concrete mechanism. The mechanism is credible: by playing enough time in a stable position (meaning that $q_i$ does not change) the cook can calculate the frequency of reward, adaptation, and confirmation prices, and compare them with the probabilities which can be derived from the function $p(\cdot)$ which was announced in advance. 

\subsection{How the fisher chooses $p(q)$?} Indeed, instead of saying that the fisher knows the distribution of $q$ we need to choose $p(q)$, which, again, means to make a guess about distribution. A subtle difference is that it is hard to guess the distribution if we suspect that (surely, with small probability, but what is the order of magnitude?) $q$ can be really huge. Also, our mechanism can be used to find $q$ of a given customer, and the customers are reluctant to reveal that information if they know that the game is playing for many rounds in the future.

\subsection{Further questions} How do real people behave playing this game? (cf. \cite{ostrovsky2011reserve}) Let us say, two participants play 20 rounds, and it is known that $q$ belongs to a given interval, but the distribution of $q$ is not known to the fisher role player. The players should be paid: e.g. a fixed amount of money is equal to $20q$, and then the players get the corresponding proportion of that money according to their results. In such a way both are interested in making the deal (any rejection is a waste of $q$ money that they should ``divide'' among them). I guess that the fairness will not be an issue here because actual $q$ may be small or may be big, the fisher-player has no estimate of it and subsequently can not perceive a deal as fair or not until the end of game when $q$ is revealed.
 
A multi-person game may be considered in the spirit of \cite{athey2013efficient}: let the fisher face several cooks (of a priori different types), the game be infinite, and there be no discounting. It seems that for any mechanism there will be a distortion at the top estimate, but only for the cook of the highest type, whose presence coerces all other cooks to reveal their types. Finally, it would be interesting to find a formal setup with only one buyer where there is a force similar to the presence of a higher type buyer but weaker than that (an instance of such a force is a common trick when a seller tells you that there is another buyer who agreed to pay way more than you, but tomorrow).

\bibliography{../../Dropbox/bibliography.bib}

\begin{thebibliography}{10}

\bibitem{akbarpour2018credible}
M.~Akbarpour and S.~Li.
\newblock Credible mechanisms.
\newblock In {\em Proceedings of the 2018 ACM Conference on Economics and
  Computation}, pages 371--371. ACM, 2018.

\bibitem{amin2013learning}
K.~Amin, A.~Rostamizadeh, and U.~Syed.
\newblock Learning prices for repeated auctions with strategic buyers.
\newblock In {\em Advances in Neural Information Processing Systems}, pages
  1169--1177, 2013.

\bibitem{athey2013efficient}
S.~Athey and I.~Segal.
\newblock An efficient dynamic mechanism.
\newblock {\em Econometrica}, 81(6):2463--2485, 2013.

\bibitem{babaioff2011posting}
M.~Babaioff, L.~Blumrosen, S.~Dughmi, and Y.~Singer.
\newblock Posting prices with unknown distributions.
\newblock In {\em In ICS}. Citeseer, 2011.

\bibitem{babaioff2010detail}
M.~Babaioff, S.~Dughmi, and A.~Slivkins.
\newblock Detail-free, posted-price mechanisms for limited supply online
  auctions.
\newblock 2010.

\bibitem{bergemann2005robust}
D.~Bergemann and S.~Morris.
\newblock Robust mechanism design.
\newblock {\em Econometrica}, 73(6):1771--1813, 2005.

\bibitem{bergemann2008pricing}
D.~Bergemann and K.~H. Schlag.
\newblock Pricing without priors.
\newblock {\em Journal of the European Economic Association}, 6(2-3):560--569,
  2008.

\bibitem{devanur2015perfect}
N.~R. Devanur, Y.~Peres, and B.~Sivan.
\newblock Perfect bayesian equilibria in repeated sales.
\newblock In {\em Proceedings of the twenty-sixth annual ACM-SIAM symposium on
  Discrete algorithms}, pages 983--1002. Society for Industrial and Applied
  Mathematics, 2015.

\bibitem{dobzinski2009power}
S.~Dobzinski and S.~Dughmi.
\newblock On the power of randomization in algorithmic mechanism design.
\newblock In {\em Foundations of Computer Science, 2009. FOCS'09. 50th Annual
  IEEE Symposium on}, pages 505--514. IEEE, 2009.

\bibitem{drutsa2017horizon}
A.~Drutsa.
\newblock Horizon-independent optimal pricing in repeated auctions with
  truthful and strategic buyers.
\newblock In {\em Proceedings of the 26th International Conference on World
  Wide Web}, pages 33--42. International World Wide Web Conferences Steering
  Committee, 2017.

\bibitem{drutsa2017consistency}
A.~Drutsa.
\newblock On consistency of optimal pricing algorithms in repeated posted-price
  auctions with strategic buyer.
\newblock {\em arXiv preprint arXiv:1707.05101}, 2017.

\bibitem{goldner2016prior}
K.~Goldner and A.~R. Karlin.
\newblock A prior-independent revenue-maximizing auction for multiple additive
  bidders.
\newblock In {\em International Conference on Web and Internet Economics},
  pages 160--173. Springer, 2016.

\bibitem{hart1988contract}
O.~D. Hart and J.~Tirole.
\newblock Contract renegotiation and coasian dynamics.
\newblock {\em The Review of Economic Studies}, 55(4):509--540, 1988.

\bibitem{immorlica2017repeated}
N.~Immorlica, B.~Lucier, E.~Pountourakis, and S.~Taggart.
\newblock Repeated sales with multiple strategic buyers.
\newblock In {\em Proceedings of the 2017 ACM Conference on Economics and
  Computation}, pages 167--168. ACM, 2017.

\bibitem{kleinberg2003value}
R.~Kleinberg and T.~Leighton.
\newblock The value of knowing a demand curve: Bounds on regret for online
  posted-price auctions.
\newblock In {\em null}, page 594. IEEE, 2003.

\bibitem{malcomson2016relational}
J.~M. Malcomson.
\newblock Relational incentive contracts with persistent private information.
\newblock {\em Econometrica}, 84(1):317--346, 2016.

\bibitem{mohri2014optimal}
M.~Mohri and A.~Munoz.
\newblock Optimal regret minimization in posted-price auctions with strategic
  buyers.
\newblock In {\em Advances in Neural Information Processing Systems}, pages
  1871--1879, 2014.

\bibitem{mohri2015revenue}
M.~Mohri and A.~Munoz.
\newblock Revenue optimization against strategic buyers.
\newblock In {\em Advances in Neural Information Processing Systems}, pages
  2530--2538, 2015.

\bibitem{myerson1981optimal}
R.~B. Myerson.
\newblock Optimal auction design.
\newblock {\em Mathematics of operations research}, 6(1):58--73, 1981.

\bibitem{ostrovsky2011reserve}
M.~Ostrovsky and M.~Schwarz.
\newblock Reserve prices in internet advertising auctions: A field experiment.
\newblock In {\em Proceedings of the 12th ACM conference on Electronic
  commerce}, pages 59--60. ACM, 2011.

\bibitem{schmidt1993commitment}
K.~M. Schmidt et~al.
\newblock Commitment through incomplete information in a simple repeated
  bargaining game.
\newblock {\em Journal of Economic Theory}, 60:114--114, 1993.

\bibitem{tang2018price}
P.~Tang and Y.~Zeng.
\newblock The price of prior dependence in auctions.
\newblock In {\em Proceedings of the 2018 ACM Conference on Economics and
  Computation}, pages 485--502. ACM, 2018.

\end{thebibliography}
\bibliographystyle{abbrv}

\end{document}